\DeclareMathOperator{\den}{\mathrm{den}}
\DeclareMathOperator{\lcm}{\mathrm{lcm}}
\theoremstyle{theorem}
\newtheorem{corollary}{Corollary}
\newtheorem{lemma}[corollary]{Lemma}
\newtheorem{theorem}[corollary]{Theorem}
\begin{document}

\title{Quantum Walks in the Normalized Laplacian}
\author{Gabriel Coutinho$^*$ \and Pedro Ferreira Baptista\footnote{Both authors at the Department of Computer Science, UFMG - Belo Horizonte, Brazil. [gabriel, pedro.baptista@dcc.ufmg.br] }}
\date{\today}
\maketitle
\vspace{-0.8cm}

\begin{abstract} 

We provide a characterization of perfect state transfer in a quantum walk whose Hamiltonian is given by the normalized Laplacian. We discuss a connection between classical random walks and quantum walks only present in this model, and we also rule out several trees as candidates to host perfect state transfer.
\end{abstract}

\section{Introduction}

In this paper, we look at finite, undirected graphs without weights on $n$ vertices. We also assume that the graph is connected. We perform a quantum walk on the 1-excitation subspace of $G$ with respect to a Hamiltonian given by the normalized Laplacian matrix $\mathcal{L}(G)$. For a more explicit description of the global Hamiltonian resulting in this quantum walk, we refer to \cite[
Theorem 2.1]{chaves2022and}.

The normalized Laplacian has been extensively studied in \cite{FanChungSGT}, where, among other things, it is shown to be a natural approach to describe several physical processes in a graph, including random walks and heat kernels. On the other hand, the spike in interest in studying quantum walks in graphs over the past twenty years or so (see for instance \cite{childs2009universal,kendon2011perfect,kay2010perfect,chakraborty2016spatial,wong2016laplacian,venegas2012quantum,CoutinhoGodsilSurvey,kadian2021quantum} and references therein) has overlooked the usage of the normalized Laplacian as a suitable Hamiltonian, with the sole exception of \cite{alvir2016perfect}. This latter work mainly discusses quantum walks modeled by Hamiltonians which correspond to the (combinatorial) Laplacian, but it contains a result that a Hamiltonian based on the normalized Laplacian would not yield perfect state transfer in paths (also called lines).

One reason for an apparent difficulty in dealing with the normalized Laplacian is that its entries are typically not integers, and therefore its spectrum is not made out of algebraic integers. A main method \cite{GodsilPerfectStateTransfer12} to study state transfer in graphs seems to require these facts. Our first and main result in this paper is a proof that a similar result holds for the normalized Laplacian. This is shown in Section \ref{secIntronormalizedLaplacian}.

As an application, we achieve partial success in ruling out perfect state transfer in trees according to the normalized Laplacian, in Section \ref{secpstTreenormalizedLaplacian}. State transfer in trees is a common theme of interest, as trees are good candidate architectures for quantum networks. In \cite{Coutinho2015}, it is shown that state transfer does not happen in trees for the quantum walk model described by the combinatorial Laplacian, and it is conjectured that the analogous result holds for the adjacency matrix model. Our work in this paper suggests that there could be a deeper reason, related to graph connectivity and which transcends the Hamiltonian model, that prevents state transfer from happening in poorly connected graphs.

We start our paper by drawing attention to the concept of cospectrality within the normalized Laplacian framework. This concept plays a key role in understanding the combinatorics of quantum walks in the adjacency matrix model, but in contrast, it is poorly understood in the combinatorial Laplacian model. For the normalized Laplacian, this concept will bring out an interesting and new connection between quantum walks and classical random walks.

We finish our paper with some discussion of other results and a conjecture.

\section{Combinatorics of normalized Laplacian cospectrality}

Let $G$ be a connected graph on two or more vertices. Let $A$ be its adjacency matrix and $D$ the diagonal matrix with the degrees of its vertices in its entries. The combinatorial Laplacian matrix is defined as $L = D - A$, and the normalized Laplacian as 

\[\mathcal{L} = D^{-1/2} L D^{-1/2}.\]

As $\mathcal{L}$ is symmetric, it admits a spectral decomposition, denoted by $\mathcal{L} = \sum_{r = 0}^d \theta_r F_r$, and therefore the quantum walk is given by

\begin{align}
	U(t) = \exp(\ii t \mathcal{L}) =  \sum_{r = 0}^d \e^{\ii t\theta_r} F_r, \label{eq:sepcdecomp}
\end{align}\noindent
where $\theta_r$ are the eigenvalues and $F_r$ the eigenprojection matrices of $\mathcal{L}$.

Perfect state transfer between vertices $a$ and $b$ implies that $(F_r)_{a,a} = (F_r)_{b,b}$ for all $r$ (we will see more details about this later). When this occurs, we say that the vertices $a$ and $b$ are cospectral relative to the normalized Laplacian.

The idempotents $\{F_r\}$ form a basis for the adjacency algebra generated by $\mathcal{L}$, therefore
\[
	(F_r)_{a,a} = (F_r)_{b,b} \text{ for all }r \iff (\mathcal{L}^k)_{a,a} = (\mathcal{L}^k)_{b,b}\text{ for all }k \geq 0.
\] 

This has been shown in details for the adjacency matrix in \cite[Theorem 3.1]{godsil2017strongly}, but the same proof works here. 

For the adjacency matrix $A$, the definition of cospectrality carries a combinatorial interpretation; for $a, b \in V(G)$, $A^k_{a,b}$ is equal to the number of different walks of length $k$ between $a$ and $b$. So, two vertices $a$ and $b$ being cospectral means that for any integer $k$ the number of closed walks, i.e., walks that start and end at the same vertex, of size $k$ is the same for $a$ and $b$. A combinatorial interpretation for normalized Laplacian cospectrality is also available.

First, we recall the relation between the normalized Laplacian and classical random walks in graphs. In a random walk, each vertex admits a probability distribution on its edges. Then, starting at a vertex, at each discrete iteration, we jump from the current vertex to another vertex taken from a sampling of the edges of the current vertex. In the simplest case, the probability distribution is uniform, therefore the transition matrix of the random walk is given by $D^{-1}A$. More generally, if $v$ is a probability distribution over the vertices, $v^T (D^{-1}A)^k $ gives the probability distribution after $k$ steps of the random walk.

From the definition of the normalized Laplacian and by conjugating $\mathcal{L}$ by $D^{-1/2}$, we get that 

\[
    D^{-1/2} \mathcal{L} D^{1/2} = D^{-1}L,
\]

\noindent
which can be rearranged into

\[
    D^{-1}A = D^{-1/2} (I - \mathcal{L}) D^{1/2}.
\]

If we start at a vertex $a$, the probability that we end at that vertex after $k$ steps of the random walk is $[(D^{-1}A)^k]_{a,a}$. The connection above will show that normalized Laplacian cospectrality between $a$ and $b$ is equivalent to having that this probability is the same for both vertices.

In this paper, we opt to use the bra-ket notation, thus the characteristic vector of vertex $a$ is denoted by $\ket a$ and the corresponding dual functional by $\bra a$.

\begin{lemma}
\label{lemmaCospectralitynormalizedLaplacian}
Let $G$ be a connected graph with at least two vertices. Vertices $a$ and $b$ are cospectral with respect to the normalized Laplacian if and only if, for any $k \geq 0$, $[(D^{-1} A)^k]_{a,a} = [(D^{-1} A)^k]_{b,b}$.
\begin{proof}
We know that $D^{-1}A = D^{-1/2} (I - \mathcal{L}) D^{1/2}$. Thus, if $\ket a$ stands for the characteristic vector of the vertex $a$, we have
\begin{align*}
    [(D^{-1}A)^k]_{a,a} &= [(D^{-1/2} (I - \mathcal{L}) D^{1/2})^k]_{a,a}\\
    &= \bra a D^{-1/2} (I - \mathcal{L})^k D^{1/2} \ket a \\
    &=  [(I - \mathcal{L})]^k_{a,a}.
\end{align*}

We already saw that $a$ and $b$ are cospectral if and only if $(\mathcal{L}^k)_{a,a} = (\mathcal{L}^k)_{b,b}$ for all $k$. The conclusion now follows from the fact that $(I - \mathcal{L})^k$ is a polynomial in $\mathcal{L}$, and thus $[(I - \mathcal{L})]^k_{a,a} = [(I - \mathcal{L})]^k_{b,b}$.
\end{proof}
\end{lemma}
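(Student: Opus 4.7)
The plan is to exploit the similarity $D^{-1}A = D^{-1/2}(I-\mathcal{L})D^{1/2}$ that was just derived, and reduce the statement to the already-established fact that cospectrality relative to $\mathcal{L}$ is equivalent to $(\mathcal{L}^k)_{a,a}=(\mathcal{L}^k)_{b,b}$ for every $k\ge 0$.

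First I would raise the similarity to the $k$-th power. Because conjugation is compatible with matrix multiplication, the middle factors collapse and one obtains
\[
(D^{-1}A)^k = D^{-1/2}(I-\mathcal{L})^k D^{1/2}.
\]
Sandwiching between $\bra a$ and $\ket a$, I would use that $D^{\pm 1/2}\ket a = d_a^{\pm 1/2}\ket a$, so the two diagonal factors cancel and
\[
[(D^{-1}A)^k]_{a,a} = [(I-\mathcal{L})^k]_{a,a},
\]
and likewise for $b$. This step is purely bookkeeping; the only thing to be careful about is that the cancellation really is exact because $\ket a$ is an eigenvector of the diagonal matrix $D$.

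At this point the lemma reduces to showing that the sequences $\{(\mathcal{L}^k)_{a,a}\}_{k\ge 0}$ and $\{(\mathcal{L}^k)_{b,b}\}_{k\ge 0}$ agree if and only if the sequences $\{[(I-\mathcal{L})^k]_{a,a}\}_{k\ge 0}$ and $\{[(I-\mathcal{L})^k]_{b,b}\}_{k\ge 0}$ agree. Both implications follow immediately from the binomial theorem: $(I-\mathcal{L})^k$ is a polynomial in $\mathcal{L}$, and conversely $\mathcal{L}^k = (I-(I-\mathcal{L}))^k$ is a polynomial in $I-\mathcal{L}$. Taking $(a,a)$-entries is linear, so equality of all powers of one matrix on the diagonal passes to equality of all polynomials of that matrix on the diagonal, and in particular to all powers of the other.

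There is no real obstacle here; the only substantive content is the identity $D^{-1}A = D^{-1/2}(I-\mathcal{L})D^{1/2}$ (already proved) together with the observation that $\ket a$ is an eigenvector of every diagonal matrix. The cleanest write-up just displays the three-line computation of $[(D^{-1}A)^k]_{a,a}$ and then invokes the polynomial equivalence, exactly as I have sketched.
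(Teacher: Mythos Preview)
Your proposal is correct and follows essentially the same route as the paper: use the similarity $D^{-1}A = D^{-1/2}(I-\mathcal{L})D^{1/2}$ to reduce $[(D^{-1}A)^k]_{a,a}$ to $[(I-\mathcal{L})^k]_{a,a}$, and then invoke the polynomial relationship between $\mathcal{L}$ and $I-\mathcal{L}$. If anything, you are slightly more careful than the paper in making the converse direction explicit via $\mathcal{L}^k = (I-(I-\mathcal{L}))^k$.
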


\newcommand{\LL}{\mathcal{L}}
\section{State transfer and periodicity characterization}
\label{secIntronormalizedLaplacian}

We assume the graph $G$ is connected on at least two vertices.

We start by recalling some basic properties of the matrix $\LL$, which we will make use without reference. See \cite{FanChungSGT} for a complete treatment and proofs of these and other basic facts. The normalized Laplacian is positive semidefinite, and the multiplicity of $0$ as an eigenvalue is equal to the number of connected components of the graph. If the graph is connected, the unique (up to scaling) eigenvector in the $0$ eigenspace is non-zero everywhere with entries of the same sign (in fact a multiple of $D^{1/2} \ket{\mathds{1}}$, $\ket{\mathds{1}} $ the all-ones vector).

We typically assume that the eigenvalues are ordered as $0 = \theta_0 < \theta_1 \leq  ... \leq \theta_{n-1}$. It is immediate to verify that the eigenvalues sum to $n$, and this implies that $\theta_1 \leq \frac{n}{n-1}$, with equality if and only if $G$ is the complete graph. In fact, if $G$ is not complete, then $\theta_1 \leq 1$, and in general, for all $i \leq n - 1$, we have       
\[
	\theta_i \leq 2,
\]
with $\theta_{n-1} = 2$ if and only if $G$ is bipartite. In this case, the entire spectrum is symmetric around $1$.

Recall that the quantum walk transition matrix is given by $U(t) = \exp(\ii t \mathcal{L})$. We say that perfect state transfer happens between vertices $a$ and $b$ at time $\tau$ if there is a complex unit $\lambda$ so that
\[
	U(\tau) \ket a = \lambda \ket b.
\]
The first thing to notice is that, upon decomposing the transition matrix into spectral idempotents (as in Equation~\eqref{eq:sepcdecomp}), we have that perfect state transfer is equivalent to having, for all eigenvalues $\theta_r$,
\begin{align}
	\e^{\ii t\theta_r} F_r \ket a = \lambda F_r \ket b. \label{eq:1}
\end{align}
As $F_0$ is real and all positive, it must be that $1 = \e^{\ii t\theta_0} = \lambda$. Also, $U(t)$ is symmetric, and therefore
\[
	U(\tau) \ket a = \ket b \implies U(2\tau) \ket a = U(\tau) \ket b = \ket a.
\]
That is, vertex $a$ is periodic at time $2\tau$.

Further, Equation~\eqref{eq:1} is equivalent to having $F_r \ket a = \pm F_r \ket b$, and $\e^{\ii t\theta_r} = \pm 1$, with signs matching for each index $r$. The former condition is called strong cospectrality between vertices $a$ and $b$, and readily implies as a consequence that $a$ and $b$ are cospectral, that is, $(F_r)_{a,a} = (F_r)_{b,b}$. See \cite{Godsil2008,Godsil2010,CoutinhoPhD} for an introductory treatment of these facts to the quantum walk given by the adjacency matrix $A(G)$ but which also applies for any Hermitian matrix instead.

We say that an eigenvalue $\theta_r$ is in the eigenvalue support of a vertex $a$ if $E_r \ket a \neq 0$. Whether or not there exists a time for which a vertex is periodic can be completely determined by an integrality condition on its eigenvalue support. 

For any real numbers $\theta_i, \theta_j, \theta_k, \theta_l$, with $\theta_k \neq \theta_l$, we say that the ratio condition holds for them if

\[
    \frac{\theta_i - \theta_j}{\theta_k - \theta_l} \in \mathbb{Q}.
\]

\begin{theorem}[{\cite[Lemma 2.2]{Godsil2008}}]
\label{Theoremratiocondition}
A graph $G$ is periodic at a vertex $a$ with respect to $\LL$ if and only if the ratio condition holds for the eigenvalues of $\LL$ in the support of vertex $a$. 
\end{theorem}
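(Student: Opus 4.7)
The plan is to argue directly from the spectral decomposition. Periodicity at $a$ means there is a time $\tau > 0$ and a complex unit $\gamma$ with $U(\tau)\ket{a} = \gamma\ket{a}$. Using $U(\tau) = \sum_r \e^{\ii \tau \theta_r} F_r$ and the mutual orthogonality of the spectral idempotents, one can equate eigenspace components to decouple this identity into
\[
	\e^{\ii \tau \theta_r} F_r \ket{a} = \gamma F_r \ket{a} \quad \text{for every } r.
\]
For those $r$ in the eigenvalue support of $a$, $F_r \ket{a} \neq 0$, so this forces $\e^{\ii \tau \theta_r} = \gamma$ on the support. Conversely, if $\e^{\ii\tau\theta_r} = \gamma$ holds for every $r$ in the support, reassembling the decomposition (and using $\sum_r F_r\ket{a} = \ket{a}$) recovers $U(\tau)\ket{a} = \gamma\ket{a}$. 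Thus periodicity at $a$ is equivalent to the existence of $\tau > 0$ and a unimodular $\gamma$ that simultaneously solve $\e^{\ii\tau\theta_r} = \gamma$ for every $\theta_r$ in the support.

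For the forward direction, from such a $\tau$ we get $\e^{\ii \tau(\theta_r - \theta_s)} = 1$ for any two eigenvalues $\theta_r,\theta_s$ in the support, i.e.\ $\tau(\theta_r - \theta_s)/(2\pi) \in \mathbb{Z}$. Dividing two such integrality conditions, for any $\theta_i,\theta_j,\theta_k,\theta_l$ in the support with $\theta_k \neq \theta_l$,
\[
	\frac{\theta_i - \theta_j}{\theta_k - \theta_l} = \frac{\tau(\theta_i-\theta_j)/(2\pi)}{\tau(\theta_k-\theta_l)/(2\pi)} \in \mathbb{Q},
\]
which is the ratio condition.

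For the converse, assume the ratio condition holds on the support. If the support is a singleton, any $\tau$ works trivially, so fix two distinct eigenvalues $\theta_k \neq \theta_l$ in the support. For each $\theta_r$ in the support, write $(\theta_r - \theta_k)/(\theta_l - \theta_k) = p_r/q_r$ with $p_r, q_r$ integers, and let $N = \lcm_r\{q_r\}$. Setting $\tau = 2\pi N/(\theta_l - \theta_k)$, one checks that $\tau(\theta_r - \theta_k) = 2\pi N p_r/q_r \in 2\pi\mathbb{Z}$, so $\e^{\ii\tau\theta_r} = \e^{\ii\tau\theta_k} =: \gamma$ uniformly on the support. By the equivalence established above, $U(\tau)\ket a = \gamma\ket a$, as required.

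The argument is essentially formal and uses nothing specific to $\LL$ beyond it being Hermitian; the only point that requires a modicum of care is the decoupling step, which relies on the strong orthogonality $F_r F_s = \delta_{rs} F_r$ of the spectral idempotents so that $\sum_r (\e^{\ii\tau\theta_r} - \gamma) F_r\ket a = 0$ implies each term vanishes. This is precisely why the same proof that works for the adjacency matrix in \cite{Godsil2008} transfers to the normalized Laplacian without change.
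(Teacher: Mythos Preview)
Your argument is correct and is the standard one. Note that the paper itself does not prove this statement: it is quoted as \cite[Lemma 2.2]{Godsil2008} and used as a black box, so there is no in-paper proof to compare against. Your write-up is essentially the proof one finds in the cited reference, and as you observe it uses only that $\LL$ is Hermitian, which is exactly why the adjacency-matrix result transfers verbatim. The only cosmetic point is to ensure $\tau>0$ in the converse (pick $\theta_l>\theta_k$, or replace $\tau$ by $|\tau|$), which is harmless.
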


A stronger characterization of the ratio condition is available.

Let $\alpha$ be a real number, then if it is a root of a rational polynomial, we say that it is an algebraic number. If the polynomial has integer coefficients and is monic we say that it is an algebraic integer. Moreover, if the polynomial has degree two, we say that it is a quadratic irrational and quadratic integer, respectively.

The following comes from \cite[Theorem 6.1]{Godsil2010} and \cite[Lemma 5.4]{chan2019quantum}, upon realizing that any set of algebraic numbers can be scaled with a common factor to a set of algebraic integers.

\begin{lemma}
\label{lemma:ratioalgebraicnumbers}
Let $S = \{\theta_0, \theta_1, ..., \theta_d\}$ be a set of real algebraic numbers, closed under taking algebraic conjugates. Then, the ration condition holds for $S$ if and only if either condition holds:
\begin{enumerate}
    \item The elements in $S$ are rational.
    \item The elements in $S$ are quadratic irrational. Moreover, there is a square-free integer $\Delta > 1$ and rational numbers $a, b_0, b_1, ..., b_d$ so that
    \[
        \theta_r = \frac{1}{2} (a + b_r \sqrt{\Delta}).
    \]
\end{enumerate} \qed
\end{lemma}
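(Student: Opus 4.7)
One direction (if) is a direct check. In case~1 every difference is rational, and in case~2 each difference equals $\tfrac{1}{2}(b_i-b_j)\sqrt{\Delta}$, so in both cases a ratio of two nonzero differences is rational.

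For the converse, assume the ratio condition (the case $|S|\leq 1$ is vacuous) and fix any $\alpha := \theta_1 - \theta_0 \neq 0$. The ratio condition gives $\theta_r - \theta_0 = c_r \alpha$ with $c_r \in \mathbb{Q}$ for every $r$, so $F := \mathbb{Q}(S) = \mathbb{Q}(\theta_0,\alpha)$. Because $S$ is closed under algebraic conjugation, $F/\mathbb{Q}$ is Galois; let $\Gamma := \operatorname{Gal}(F/\mathbb{Q})$. Each $\sigma\in\Gamma$ permutes $S$, so $\sigma(\alpha) = \sigma(\theta_1)-\sigma(\theta_0)$ is a nonzero difference of elements of $S$, hence a rational multiple of $\alpha$. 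This produces a group homomorphism $\chi\colon \Gamma \to \mathbb{Q}^\times$, $\sigma\mapsto \sigma(\alpha)/\alpha$, whose image is necessarily a finite subgroup of $\mathbb{Q}^\times$ and therefore either $\{1\}$ or $\{\pm 1\}$.

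If $\operatorname{im}(\chi)=\{1\}$, then $\alpha$ is fixed by $\Gamma$, so $\alpha\in\mathbb{Q}$; each $\sigma(\theta_0)\in S$ is then of the form $\theta_0 + (\text{rational})$, and summing over $\Gamma$ (using that the trace of $\theta_0$ is rational) forces $\theta_0\in\mathbb{Q}$, so all $\theta_r$ are rational. If $\operatorname{im}(\chi)=\{\pm 1\}$, then $\alpha^2\in\mathbb{Q}$ while $\alpha\notin\mathbb{Q}$, so $\alpha = r\sqrt{\Delta}$ for some $r\in\mathbb{Q}^\times$ and a square-free integer $\Delta>1$. The kernel $H := \ker\chi$ has index~$2$ in $\Gamma$ with fixed field $\mathbb{Q}(\sqrt{\Delta})$, and the trace of $\theta_0$ from $F$ to $\mathbb{Q}(\sqrt{\Delta})$ equals $|H|\theta_0 + (\text{rational})\cdot\alpha$, which places $\theta_0$ in $\mathbb{Q}(\sqrt{\Delta})$. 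Writing $\theta_0 = s + t\sqrt{\Delta}$ with $s,t\in\mathbb{Q}$ then gives $\theta_r = s + (t + c_r r)\sqrt{\Delta}$, i.e.\ the required form $\tfrac{1}{2}(a + b_r\sqrt{\Delta})$.

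The main obstacle is setting up the Galois-theoretic framework cleanly: recognizing that $\sigma \mapsto \sigma(\alpha)/\alpha$ is a homomorphism into $\mathbb{Q}^\times$ and exploiting that any finite subgroup of $\mathbb{Q}^\times$ is $\{1\}$ or $\{\pm 1\}$. Once that dichotomy is in place, each case is dispatched by a short trace computation in the appropriate fixed field. One could alternatively first scale every $\theta_r$ by a common positive integer to reduce to algebraic integers and then invoke \cite[Theorem 6.1]{Godsil2010} and \cite[Lemma 5.4]{chan2019quantum} directly, as hinted in the paragraph preceding the lemma; the argument above makes this reduction unnecessary.
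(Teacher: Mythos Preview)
Your argument is correct. The paper itself does not supply a proof of this lemma: it is stated with a terminal \qed\ and justified only by the remark preceding it that one may scale the $\theta_r$ by a common integer to obtain algebraic integers and then invoke \cite[Theorem~6.1]{Godsil2010} and \cite[Lemma~5.4]{chan2019quantum}. Your route is genuinely different and self-contained: rather than reducing to the algebraic-integer case, you work directly with the Galois group of $\mathbb{Q}(S)/\mathbb{Q}$, note that $\sigma\mapsto\sigma(\alpha)/\alpha$ is a homomorphism into $\mathbb{Q}^\times$ whose image, being finite, lies in $\{\pm 1\}$, and then dispatch the two cases with short trace computations over the fixed fields $\mathbb{Q}$ and $\mathbb{Q}(\sqrt{\Delta})$. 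The paper's approach buys brevity by outsourcing the work; yours buys independence from two external references and makes transparent exactly where the rational/quadratic dichotomy originates (namely, the torsion of $\mathbb{Q}^\times$). One minor remark: in your case~2 the formula permits some $b_r=0$, so the corresponding $\theta_r$ would be rational rather than quadratic irrational; this matches the displayed formula in the lemma even though the phrase ``quadratic irrational'' in the statement is slightly loose on this point, and it does not affect the application in Theorem~\ref{theoremperiodicityeigenvalues}.
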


Our first result is that, for the normalized Laplacian, the ratio condition holding for the eigenvalues in the support of a vertex implies that these eigenvalues are rational.

Let $\den(\cdot)$ be a function that returns the denominators of a rational number. If the number is an integer, then it returns $1$.

\begin{theorem}
\label{theoremperiodicityeigenvalues}
Suppose $G$ is a connected graph with at least two vertices, $\mathcal{L}$ its normalized Laplacian and let $S = \{\theta_0, \theta_1, ..., \theta_d\}$ be the distinct eigenvalue support of the vertex $a$, with $\theta_0 < \theta_1 < ... < \theta_d$. Then $G$ is periodic at $a$ if and only if the eigenvalues in $S$ are rational. Moreover, if the condition holds, let

\[
m = \lcm \{ \den (\theta_r ) \}_{\theta_r \in S}
\]
and 
\[
g = \gcd \{ m \theta_r\}_{\theta_r \in S}.
\]
Then the smallest positive $\tau$, such that $U(\tau) \ket a = \ket a$ is 
\[
\tau = \frac{2 m \pi}{g}    
\]
and any other time periodicity at $a$ occurs is an integer multiple of $\tau$.
\end{theorem}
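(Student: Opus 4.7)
The plan is to combine the existing machinery (Theorem~\ref{Theoremratiocondition} and Lemma~\ref{lemma:ratioalgebraicnumbers}) with two features specific to the normalized Laplacian: the rationality of simple-random-walk return probabilities, and the non-negativity of the spectrum of $\mathcal{L}$. By Theorem~\ref{Theoremratiocondition}, periodicity at $a$ is equivalent to the ratio condition on $S$, so Lemma~\ref{lemma:ratioalgebraicnumbers} will leave me with two options: either $S \subset \mathbb{Q}$ (the desired conclusion), or $S$ consists of quadratic irrationals of a common form. My task is to exclude the second possibility; the converse direction, that rationality makes the ratio condition automatic, is immediate.

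To apply Lemma~\ref{lemma:ratioalgebraicnumbers} I first check that $S$ is closed under Galois conjugation over $\mathbb{Q}$. Lemma~\ref{lemmaCospectralitynormalizedLaplacian} already gives $[(I-\mathcal{L})^k]_{a,a} = [(D^{-1}A)^k]_{a,a} \in \mathbb{Q}$ for every $k \ge 0$, and expanding $\mathcal{L}^k$ as a $\mathbb{Z}$-linear combination of powers of $I-\mathcal{L}$ yields $(\mathcal{L}^k)_{a,a} \in \mathbb{Q}$. These are the moments of the spectral measure $\mu_a = \sum_{\theta_r \in S} (F_r)_{a,a}\, \delta_{\theta_r}$. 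For any $\sigma \in \mathrm{Gal}(\overline{\mathbb{Q}}/\mathbb{Q})$ the pushforward $\sigma(\mu_a)$ has the same (rational, hence $\sigma$-fixed) moments as $\mu_a$, and since both are finitely supported, a Vandermonde argument on the distinct support points forces $\sigma(\mu_a) = \mu_a$. In particular $\sigma(S) = S$.

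I then rule out the quadratic case. The eigenvalue $\theta_0 = 0$ lies in $S$ for every vertex because $F_0$ is a positive multiple of $D^{1/2}\ket{\mathds{1}}\bra{\mathds{1}}D^{1/2}$, so $(F_0)_{a,a}$ is a positive multiple of $d_a$. If the quadratic case held, every $\theta_r \in S$ would take the form $\tfrac{1}{2}(a + b_r\sqrt{\Delta})$ with $a, b_r \in \mathbb{Q}$ and $\Delta > 1$ square-free; substituting $\theta_0 = 0$ and using the irrationality of $\sqrt{\Delta}$ forces $a = b_0 = 0$, so each $\theta_r$ is a rational multiple of $\sqrt{\Delta}$. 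Galois-invariance of $S$ then produces $-\theta_r \in S$ for every nonzero $\theta_r \in S$, contradicting $\mathcal{L} \succeq 0$. Hence $S \subset \mathbb{Q}$.

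Finally, the formula for the minimum period is elementary. A relation $U(\tau)\ket{a} = \mu\ket{a}$ projected onto each $F_r$ with $\theta_r \in S$ gives $\e^{\ii\tau\theta_r} = \mu$; evaluated at $\theta_0 = 0$ this forces $\mu = 1$, so periodicity at $a$ amounts to $\tau\theta_r \in 2\pi\mathbb{Z}$ for every $\theta_r \in S$. Writing $n_r := m\theta_r \in \mathbb{Z}$ so that $\gcd\{n_r\} = g$, the system becomes $\tau n_r \in 2\pi m\,\mathbb{Z}$ for all $r$, whose positive solutions form $(2\pi m/g)\,\mathbb{Z}_{>0}$, yielding both the smallest $\tau$ and the fact that every period is a positive integer multiple of it. The main obstacle I anticipate is the Galois-closure step for $S$; once that is in hand, the quadratic branch collapses cleanly and the period formula is a gcd bookkeeping exercise.
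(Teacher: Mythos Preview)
Your proof is correct and follows the same overall architecture as the paper's: reduce periodicity to the ratio condition via Theorem~\ref{Theoremratiocondition}, verify that $S$ is closed under algebraic conjugation so that Lemma~\ref{lemma:ratioalgebraicnumbers} applies, eliminate the quadratic branch using $0\in S$ together with $\mathcal{L}\succeq 0$, and then compute the minimal period by a $\gcd$ argument.

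The one genuine difference is in how Galois closure of $S$ is obtained. The paper works with eigenvectors: since $D^{-1}L$ is a rational matrix, any field automorphism fixes it and therefore permutes its eigenspaces, and conjugation by $D^{\pm 1/2}$ transports this permutation to the eigenspaces of $\mathcal{L}$, so $F_r\ket a=0$ if and only if $F_s\ket a=0$ whenever $\theta_s$ is a conjugate of $\theta_r$. You instead use moments: from the same similarity $D^{-1/2}\mathcal{L} D^{1/2}=D^{-1}L$ you extract $(\mathcal{L}^k)_{a,a}=[(D^{-1}L)^k]_{a,a}\in\mathbb{Q}$, and then a Vandermonde/moment-uniqueness argument forces the finitely supported spectral measure $\mu_a$ to be Galois-invariant. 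Both routes rest on the same underlying fact (similarity to a rational matrix); yours avoids tracking eigenvectors through $D^{\pm 1/2}$ at the cost of the moment-uniqueness step, while the paper's is more hands-on but slightly more direct. Your derivation of the minimal period is also marginally more complete, as it exhibits the full set of periodic times rather than only checking that $t$ must be an integer.
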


\begin{proof}
The first thing to notice is that as $G$ is connected and has at least two vertices, then there are at least two eigenvalues in the support of $a$, for otherwise $\LL \ket a$ would be a multiple of $\ket a$.

If the eigenvalues are rational, then the ratio condition holds and by Theorem \ref{Theoremratiocondition} the vertex is periodic.

First note that the eigenvalues of $\mathcal{L}$ are algebraic numbers --- for instance, $\mathcal{L}$ is similar to $D^{-1} L$ which is a rational matrix, thus its eigenvalues are all roots of a polynomial with rational coefficients.

In order to apply Lemma~\ref{lemma:ratioalgebraicnumbers} we need to show that $S$, the eigenvalue support of $a$, is closed under taking algebraic conjugates. Let $\ket \theta_r $ be an eigenvector of $\mathcal{L}$ for the eigenvalue $\theta_r$. Let also $\theta_s$ be an algebraic conjugate of $\theta_r$. It is straightforward to show that $D^{-1/2} \ket \theta_r$ is an eigenvector of $D^{-1}L$ for the eigenvalue $\theta_r$, as

\begin{equation}
    D^{-1}LD^{-1/2}\ket{\theta_r} = D^{-1/2}\mathcal{L} D^{1/2} D^{-1/2} \ket{\theta_r} = D^{-1/2} \mathcal{L} \ket{\theta_r} = \theta_r D^{-1/2}\ket{\theta_r}.
\end{equation}

Now, we show that a field automorphism $\psi$ that sends $\theta_r$ to $\theta_s$ also sends $D^{-1/2}\ket{\theta_r}$ to some eigenvector of $\theta_s$.

\begin{align*}
    \psi(\theta_r) \psi(D^{-1/2} \ket{\theta_r}) & = \psi(D^{-1} L D^{-1/2} \ket{\theta_r}) \\ & = \psi(D^{-1} L) \psi(D^{-1/2} \ket{\theta_r}) \\ & = D^{-1} L \psi(D^{-1/2} \ket{\theta_r}).
\end{align*}

So, $E_r \ket a = 0$ if and only if $E_s \ket a = 0$, then the eigenvalue support is closed under taking conjugates.

We can then apply Lemma~\ref{lemma:ratioalgebraicnumbers}. Because $\theta_0 = 0$ is in the support of all vertices, it follows that if the eigenvalues in the support of $a$ are quadratic irrational in the form given by Lemma~\ref{lemma:ratioalgebraicnumbers}, then $a = 0$. But the conjugates of multiples of square roots are their negatives, and as the normalized Laplacian is positive semidefinite all its eigenvalues are non-negative. Thus, the eigenvalues in the eigenvalue support of $a$ are all rational.

For the final part, note that if a vertex $a$ is periodic at a time $\tau$, we can write $\tau$ as 

\begin{equation}
    \tau = t \frac{2\pi m}{g},
\end{equation}
where $t$ is some real number. One consequence of Theorem \ref{Theoremratiocondition} is that periodicity at time $\tau$ implies that there are integers $m_{r,s}$ such that $\tau(\theta_r - \theta_s) = 2 m_{r,s} \pi$ for $\theta_r, \theta_s \in S$. By fixing $\theta_s = \theta_0$, we can rewrite this equation as

\begin{equation}
    \tau = \frac{2 m_{r,0} \pi}{\theta_r},
\end{equation}

\noindent
for $\theta_r \neq 0$. Now, using both equations we defined for $\tau$, we can see that

\begin{equation}
    t \frac{m \theta_r}{g} \in \mathbb{Z}.
\end{equation}

Thus, from the definitions we gave for $m$ and $g$, we can see that $t$ must be an integer.
\end{proof}

With these results, we can give a characterization of perfect state transfer for the normalized Laplacian Hamiltonian model. The proof follows the lines of the characterization for the adjacency matrix model \cite[Theorem 2.4.4]{CoutinhoPhD} given the results obtained above, therefore we omit it.

\begin{theorem}
\label{theorempstnormalizedlaplacian}
Let $a$ and $b$ be vertices in a connected graph $G$, $U(t) = exp(\ii t \mathcal{L})$ express the unitary time evolution of our system, $\mathcal{L}$ being the normalized Laplacian. Let $S = \{\theta_0, \theta_1, ..., \theta_d\}$ be the eigenvalue support of $a$, with $0 = \theta_0 < \theta_1 < ... < \theta_d$. Then, there is perfect state transfer between $a$ and $b$ if and only if the following conditions hold:

\begin{enumerate}
    \item Vertices $a$ and $b$ are strongly cospectral.
    \item The eigenvalues in $S$ are rational.
    \item Let

\begin{equation}
    m = \lcm \{ den (\theta_r)\}_{\theta_r \in S} 
\end{equation}

\noindent
and 

\begin{equation}
    g = \gcd \{ m \theta_r\}_{\theta_r \in S}.
\end{equation}

For all $0 \leq r \leq d$, the following holds:

\begin{itemize}
    \item $(E_r)_{a,b} > 0$ if and only if $m\theta_r/g$ is even,
    \item $(E_r)_{a,b} < 0$ if and only if $m\theta_r/g$ is odd.
\end{itemize}
 
If the conditions hold, then the minimum positive time we have perfect state transfer between $a$ and $b$ is $\tau = m \pi/g$, and any other time it occurs is an odd multiple of $\tau$.
\end{enumerate} \qed
\end{theorem}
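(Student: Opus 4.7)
The plan is to adapt the standard adjacency-matrix argument of \cite[Thm.~2.4.4]{CoutinhoPhD} to the normalized Laplacian, using Theorem~\ref{theoremperiodicityeigenvalues} in place of the usual integrality facts about $A(G)$. The starting point is the spectral equivalence from Equation~(\ref{eq:1}): $U(\tau)\ket a = \ket b$ holds if and only if $\e^{\ii \tau \theta_r} E_r \ket a = E_r \ket b$ for every $r$. Since $E_r$ is real, conjugating gives $(\e^{\ii \tau\theta_r} - \e^{-\ii \tau\theta_r}) E_r \ket a = 0$, so for each $r$ in the eigenvalue support $S$ of $a$ one must have $\e^{\ii \tau \theta_r} \in \{\pm 1\}$ and $E_r \ket a = \epsilon_r E_r \ket b$ with matching signs $\epsilon_r \in \{\pm 1\}$, immediately yielding strong cospectrality (condition~1).

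For the forward direction I would then observe that $U(2\tau)\ket a = \ket a$ and invoke Theorem~\ref{theoremperiodicityeigenvalues} to force the $\theta_r \in S$ to be rational (condition~2) and $\tau = k m\pi/g$ for some positive integer $k$. Left-multiplying $E_r\ket a = \epsilon_r E_r \ket b$ by $\bra a$ gives $(E_r)_{a,b} = \epsilon_r (E_r)_{a,a}$, where $(E_r)_{a,a} = \|E_r \ket a\|^2 > 0$ for $r \in S$, so $\epsilon_r = (-1)^{km\theta_r/g}$. To match condition~3 as stated, I would argue that $k$ must be odd: otherwise every exponent $km\theta_r/g$ is even, giving $\epsilon_r = +1$ for all $r \in S$ and hence $\ket a = \sum_r E_r\ket a = \sum_r E_r \ket b = \ket b$, contradicting $a \neq b$. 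With $k$ odd, $(-1)^{km\theta_r/g} = (-1)^{m\theta_r/g}$, giving condition~3. The reverse direction is a direct check: for $\tau = m\pi/g$, condition~2 makes $m\theta_r/g$ integral so $\e^{\ii \tau\theta_r} = (-1)^{m\theta_r/g}$, conditions~1 and~3 deliver $E_r \ket a = (-1)^{m\theta_r/g} E_r \ket b$, and summing over $r$ yields $U(\tau)\ket a = \ket b$. Running the same computation at $\tau = k m\pi/g$ for any odd $k$ shows it is also a PST time, while the forward argument rules out even $k$, so all PST times are odd multiples of $m\pi/g$.

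The main obstacle is the step forcing rationality of the eigenvalues in $S$, which is where the normalized-Laplacian setting genuinely departs from the adjacency case: Theorem~\ref{theoremperiodicityeigenvalues} handles this via positive semidefiniteness, ruling out the quadratic-irrational branch of Lemma~\ref{lemma:ratioalgebraicnumbers}. Once rationality is in hand, everything else reduces to parity bookkeeping and the standard sign analysis of strong cospectrality, which explains the authors' choice to omit the detailed proof.
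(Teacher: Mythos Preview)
Your proposal is correct and follows precisely the route the paper indicates: the authors explicitly omit the proof, stating that it ``follows the lines of the characterization for the adjacency matrix model \cite[Theorem 2.4.4]{CoutinhoPhD} given the results obtained above,'' and your argument is exactly that adaptation, with Theorem~\ref{theoremperiodicityeigenvalues} supplying the rationality of the support eigenvalues and the form $\tau = k\,m\pi/g$ of any PST time. The sign bookkeeping, the parity argument ruling out even $k$ via $\ket a = \ket b$, and the verification at odd multiples are all standard and match what the cited reference does in the adjacency case.
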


\section{Perfect state transfer in trees}
\label{secpstTreenormalizedLaplacian}

Trees are minimally connected graphs, thus good choices if one wants to build a network and minimize resources. So, it is not surprising that research has been done in state transfer in trees.

In \cite{Coutinho2015}, it was shown that no tree on three or more vertices admits perfect state transfer with regard to the Laplacian matrix Hamiltonian model. It is still an open problem whether there are trees on more than 3 vertices that present perfect state transfer with respect to the adjacency matrix model.

So, considering how the normalized Laplacian can be defined in terms of the Laplacian matrix, we ask if a similar result would hold for the normalized Laplacian.

Recall from our discussion Section~\ref{secIntronormalizedLaplacian} that if $G$ is a tree, then $0$ and $2$ are eigenvalues of $\LL(G)$ with multiplicity $1$, and the spectrum is symmetric about $1$. Moreover, each eigenvector corresponding to $\theta$ give an eigenvector corresponding to $2-\theta$ obtained upon switching the sign of entries of the vertices in one of the bipartite classes.

To facilitate our notations, assuming $a$ and $b$ are strongly cospectral vertices with eigenvalue support $S$, let us subdivide $S$ into positive $S^+$ and negative $S^-$ eigenvalue supports. The former containing the eigenvalues $\theta_r$ that have the same projection onto the eigenspace, i.e., $E_r \ket a = E_r \ket b$, and the latter defined analogously. If perfect state transfer happens between $a$ and $b$, by Theorem \ref{theorempstnormalizedlaplacian}, $\theta_r \in S^+$ if and only if $m \theta_r/g$ is even for $m$ and $g$ as defined in the theorem. Similarly, $\theta_r \in S^-$ if and only if $m \theta_r/g$ is odd.

Also, we can show that for two strongly cospectral vertices, neither $S^-$ and $S^+$ are empty. Moreover, if the graph has at least three vertices, then the following result, originally proved for the combinatorial Laplacian, also holds.

\begin{lemma}\cite[Lemma 3.1]{Coutinho2015}
\label{lemmasupportisnonempty}
Assume $G$ has at least three vertices, and let $a$ and $b$ be strongly cospectral vertices with respect to $\LL$. Then, if $S^+, S^-$ are their positive and negative eigenvalue support, respectively, then $|S^+| \geq 2$ and $|S^-| \geq 1$.
\end{lemma}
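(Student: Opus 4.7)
The plan is to mimic the Laplacian-case argument from \cite{Coutinho2015} but replace the all-ones vector (which spans the kernel of $L$) with $D^{1/2}\ket{\mathds{1}}$, which spans the kernel of $\LL$. Writing $\mathrm{vol}(G) = \sum_v d_v$, the spectral idempotent for $\theta_0 = 0$ is
\[
F_0 = \frac{1}{\mathrm{vol}(G)}\, D^{1/2}\ket{\mathds{1}}\bra{\mathds{1}}D^{1/2},
\]
so $F_0\ket a = \tfrac{\sqrt{d_a}}{\mathrm{vol}(G)}\, D^{1/2}\ket{\mathds{1}}$ and $(F_0)_{a,a} = d_a/\mathrm{vol}(G)$. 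First I would use strong cospectrality, which gives $(F_0)_{a,a} = (F_0)_{b,b}$, to conclude $d_a = d_b$ and therefore $F_0\ket a = F_0\ket b$. This shows $\theta_0 \in S^+$, and in particular $|S^+|\geq 1$.

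Next, recall that by strong cospectrality every eigenvalue in the support of $a$ lies in $S^+ \cup S^-$, while $F_r\ket a = 0 = F_r\ket b$ off the support. Summing the idempotents yields
\[
\ket a - \ket b = 2\!\!\sum_{\theta_r \in S^-}\!\! F_r\ket a.
\]
If $S^- = \emptyset$, the right-hand side vanishes, forcing $\ket a = \ket b$, which contradicts $a \neq b$. Hence $|S^-| \geq 1$.

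For the final inequality $|S^+| \geq 2$, suppose for contradiction that $S^+ = \{\theta_0\}$. Then
\[
\ket a + \ket b = 2\!\!\sum_{\theta_r \in S^+}\!\! F_r\ket a = 2 F_0\ket a = \frac{2\sqrt{d_a}}{\mathrm{vol}(G)}\, D^{1/2}\ket{\mathds{1}},
\]
a vector with strictly positive entries at every vertex of $G$. But $\ket a + \ket b$ is supported only on $\{a,b\}$, so this identity forces $V(G) = \{a,b\}$ and contradicts the hypothesis $n \geq 3$.

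No serious obstacle is expected; the key point is simply the correct identification of $F_0$ in the normalized setting. Once that is in hand, strong cospectrality immediately gives $d_a = d_b$, and the remainder of the proof runs in parallel to the combinatorial Laplacian case, with the positivity of $D^{1/2}\ket{\mathds{1}}$ playing the role previously played by the positivity of $\ket{\mathds{1}}$.
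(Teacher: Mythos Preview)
Your argument is correct. The paper does not actually supply a proof of this lemma; it only cites \cite[Lemma 3.1]{Coutinho2015} and asserts that the combinatorial Laplacian argument carries over. Your write-up is precisely that adaptation: you correctly identify $F_0 = \mathrm{vol}(G)^{-1} D^{1/2}\ket{\mathds{1}}\bra{\mathds{1}}D^{1/2}$, use it to place $\theta_0$ in $S^+$, and then the two support-counting steps via $\ket a \pm \ket b = 2\sum_{S^\pm} F_r\ket a$ go through exactly as in the $L$ case, with the everywhere-positivity of $D^{1/2}\ket{\mathds{1}}$ replacing that of $\ket{\mathds{1}}$.
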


Using the known facts about eigenvalues of $\LL$ for bipartite graphs, we are able to provide quite restrictive conditions to the eigenvalue support of vertices involved in state transfer in trees according to $\LL$.

\begin{theorem}
\label{theorem:rationaleigenvaluestree}
Let $G$ be a tree on two or more vertices admitting perfect state transfer between vertices $a$ and $b$. Let $S^-, S^+$ be their negative and positive eigenvalue support, respectively. If $a$ and $b$ are on the same bipartite class, then $|S^- | = 1$. If $a$ and $b$ are on different bipartite classes and if $\theta \in S^-$, then $\theta = 2/k$ for some odd positive integer $k$.
\end{theorem}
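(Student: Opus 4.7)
The overall strategy is to combine two ingredients: the bipartite symmetry of the spectrum of $\LL$ for a tree, which pairs eigenvalues and eigenprojectors via the involution $\theta \mapsto 2-\theta$, and the parity conditions on $m\theta_r/g$ provided by Theorem~\ref{theorempstnormalizedlaplacian}.

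First I set up the symmetry. Let $P$ be the diagonal $\pm 1$ matrix associated to the bipartition of $G$. Since $\LL = I - D^{-1/2} A D^{-1/2}$ and $PAP = -A$ in a bipartite graph, $P \LL P = 2I - \LL$. Uniqueness of the spectral decomposition gives $F_{2-\theta} = P F_\theta P$, hence $F_\theta \ket a \neq 0$ if and only if $F_{2-\theta} \ket a \neq 0$, so the support $S$ is closed under $\theta \mapsto 2-\theta$. Writing $P \ket a = \epsilon_a \ket a$ and $P \ket b = \epsilon_b \ket b$ with $\epsilon_a, \epsilon_b \in \{+1,-1\}$, the identity $F_{2-\theta} \ket a = \epsilon_a P F_\theta \ket a$ shows that the strong-cospectrality sign at $\theta$ agrees with the sign at $2-\theta$ exactly when $\epsilon_a = \epsilon_b$ (same class), and flips when $\epsilon_a \neq \epsilon_b$ (different class). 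Because $0 \in S^+$ holds for any PST pair, this forces $2 \in S^+$ in the same-class case and $2 \in S^-$ in the different-class case.

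Next, I apply Theorem~\ref{theorempstnormalizedlaplacian}. Setting $n_r = m\theta_r/g \in \mathbb{Z}$, one has $\gcd\{n_r : \theta_r \in S\} = 1$ and $\theta_r \in S^-$ if and only if $n_r$ is odd. Same-class: $n_d = 2m/g$ is even, so $g \mid m$; writing $k = m/g$, every $n_r = k\theta_r$, and the involution preserves $S^-$ with unique fixed point $\theta = 1$; thus either $S^- = \{1\}$, or $S^-$ contains a pair $\{\theta, 2-\theta\}$ with $\theta \neq 1$. Different-class: $n_d = 2m/g$ is odd, so $g = 2m'$ and $m = (2\ell+1)m'$ with $n_r = (2\ell+1)\theta_r/2$; for $\theta = p/q \in S^-$ in lowest terms, requiring $n_\theta$ to be an odd integer forces $p$ even, $q$ odd, $q \mid 2\ell+1$, and both $(2\ell+1)/q$ and $p/2$ odd, giving $\theta = 2(p/2)/q$ with coprime odd numerator and denominator.

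What remains in both cases is the tree-specific reduction, and this is where I expect the main obstacle to lie. My plan is to exploit that the characteristic polynomial of $\LL$ equals $\det(\lambda D - L)/\det(D)$ with $\det(\lambda D - L) \in \mathbb{Z}[\lambda]$ of leading coefficient $\prod_v d_v$, combined with $|S^+| \geq 2$ from Lemma~\ref{lemmasupportisnonempty} and the coprimality of the $n_r$. In the same-class case this should rule out a symmetric pair $\{\theta, 2-\theta\} \subseteq S^-$, because such a pair would over-constrain the parity pattern of the $n_r$ against the integer-polynomial structure inherited from the tree, leaving only $S^- = \{1\}$. In the different-class case, the parallel argument should force $p/2 = 1$, delivering $\theta = 2/q$ with $q$ odd. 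The delicate interplay between the algebraic-number-theoretic bookkeeping of the $n_r$ and the combinatorial structure of the tree is what I expect to require the most care.
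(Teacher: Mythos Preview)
Your bipartite--symmetry setup and the parity bookkeeping with $n_r = m\theta_r/g$ are correct, and they do recover the second half of the paper's argument: once one knows that the numerator $p$ of any $\theta = p/q \in S^-$ (in lowest terms) satisfies $p \in \{1,2\}$, your involution and parity analysis finishes both cases exactly as the paper does. The problem is that you never establish this bound on $p$. In the different-class case you only obtain $p \equiv 2 \pmod 4$, and in the same-class case you only reduce to ruling out a symmetric pair $\{\theta,2-\theta\}\subseteq S^-$; neither of these is a contradiction from parity and $\gcd\{n_r\}=1$ alone, and your proposed route via $\det(\lambda D - L)\in\mathbb{Z}[\lambda]$ is too coarse to see the numerators of individual support eigenvalues.

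The paper's key idea, which your proposal is missing, is an eigenvector argument that uses the tree structure directly. Pass from $\mathcal L$ to the rational matrix $D^{-1}L$ and pick an integer eigenvector $\ket w$ for $\theta = p/q$ with coprime entries. The relation $L\ket w = \theta D\ket w$, read at a leaf $d$ with neighbour $c$, gives $\braket{d}{w}-\braket{c}{w} = (p/q)\braket{d}{w}$, hence $q\mid\braket{d}{w}$ and $\braket{d}{w}\equiv\braket{c}{w}\pmod p$. Peeling leaves recursively (each remaining vertex has exactly one ``new'' neighbour in a tree) forces all entries of $\ket w$ to be congruent modulo $p$. Strong cospectrality with $\theta\in S^-$ and $\deg a=\deg b$ give $\braket{a}{w} = -\braket{b}{w}$, so $2\braket{a}{w}\equiv 0\pmod p$; since the entries have $\gcd$ equal to $1$, this forces $p\in\{1,2\}$. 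With that in hand, your symmetry/parity analysis (which matches the paper's) completes both cases.
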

\begin{proof}
 From Theorem~\ref{theorempstnormalizedlaplacian}, we know that all eigenvalues in the eigenvalue support of $a$ and $b$ are rational. 
 
 We assume $\theta = p/q$ is a rational eigenvalue of $\LL$ in $S^-$, with $p \geq 0$ and $q \geq 1$ coprime.

The matrix $\mathcal{L}$ is similar to the matrix $D^{-1} L$, and $\mathcal{L} \ket z = \theta \ket z$ if and only if $D^{-1}LD^{-1/2} \ket z = \theta D^{-1/2} \ket z$. As $D^{-1} L $ is a rational matrix, we can assume $\ket z$ is an eigenvector for $\theta$ so that $\ket w = D^{-1/2} \ket z$ is an integer vector with coprime entries. Note that ${L \ket w = \theta D \ket w}$.

Let $d$ be a leaf and $c$ its unique neighbor. Then we can see that

\[
   \braket{d}{w} - \braket{c}{w} = \bra d L \ket w = \frac{p}{q} \braket{d}{w}.
\]

The left-hand side is the sum of two integers, therefore, as $\gcd\{p,q\} = 1$, $q \mid \braket{d}{w}$. This means that

\[    \braket{d}{w} - \braket{c}{w} \equiv \theta \braket{d}{w} \equiv 0 \mod{p}.
\]

Therefore, $\braket{d}{w} \equiv \braket{c}{w} \mod{p}.$ Since $G$ is a tree we can proceed recursively from the leafs to their unique neighbors, at each step considering vertices that have exactly one neighbor not considered before. We conclude that all entries of $\ket w$ are equivalent $\mod p$. 

With $\theta \in S^-$, it must be that

\[    \braket{a}{z} = -\braket{b}{z}.
\]

Since $D^{-1/2}$ is a non-negative matrix and $\ket w = D^{-1/2} \ket z$, we have

\[    \braket{a}{w} \equiv - \braket{b}{w} \mod p.
\]

As we assumed that the greatest common divisor of the entries of $\ket w$ is $1$, this can only happen if $p = 1$ or $p = 2$. 

Now, assume $a$ and $b$ are from the same bipartite class. As $\theta = p/q \in S^-$ with $p = 1$ or $2$, we have that $2 - \theta$ is an eigenvalue whose eigenvector entries will remain unchanged in both $a$ and $b$, that is

\[    2 - \theta = \frac{2q - p}{q} \in S^{-},
\]
and as $2q-p$ and $q$ are coprime, this means that $2q - p = 1$ or $2q - p = 2$. If $p = 1$, then $q = 1$ or $3/2$. Since $q \in \mathbb{Z}$, then $p/q = 1$. Now, if $p = 2$, then $q = 3/2$ or $q = 2$, also giving $p/q = 1$. Therefore, if $a$ and $b$ are in the same bipartite class, $p/q$ can only be equal to $1$ and from Lemma \ref{lemmasupportisnonempty} we get that $|S^-| = 1$.

Suppose now $a$ and $b$ are in different bipartite classes, thus $2 - \theta \in S^+$. As we assumed that perfect state transfer happens between $a$ and $b$, then for $m$ and $g$ as defined in Theorem \ref{theorempstnormalizedlaplacian}, we have

\begin{align*}
        \frac{m}{g}\cdot\frac{p}{q} \quad \text{is odd} \text{ and} \quad  \frac{m}{g}\cdot\frac{(2q - p)}{q} \quad \text{is even}. 
\end{align*}

If $p = 1$, $m/(gq)$ odd implies that $2q - p$ is even. But, as $p$ is odd, this is not possible. 

The only possible solution is that $p = 2$; in that case, since we assumed that $\gcd (p,q) = 1$, we have that $q$ is odd.
\end{proof}

Let $a$ and $b$ be vertices and have $N(a)$ and $N(b)$ denote the set of neighbors of $a$ and $b$, respectively. Then $a$ and $b$ are said to be twins if either $N(a) = N(b)$ or $a \cup N(a) = b \cup N(b)$.

It was shown in \cite{Coutinho2015} that if $a$ and $b$ are strongly cospectral vertices for the combinatorial Laplacian and $|S^-| = 1$, then $a$ and $b$ are twins. The analogous result holds for the normalized Laplacian, and therefore we get the following corollary.

\begin{corollary}
\label{cor:negativesupporttwins}
Let $G$ be a tree on three or more vertices, admitting perfect state transfer between vertices $a$ and $b$. If $a$ and $b$ are in the same bipartite class, then they are leafs attached to the same vertex.
\end{corollary}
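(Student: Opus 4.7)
The plan is to combine Theorem~\ref{theorem:rationaleigenvaluestree}, the normalized-Laplacian analog of the twins lemma from \cite{Coutinho2015} announced just before the statement, and the fact that a tree contains no $4$-cycle. The strategy is: strong cospectrality together with the same-bipartite-class hypothesis forces $|S^-|=1$, then forces the twin relation, and the tree structure forces that twin relation to be the leaf-siblings configuration.

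In detail, under the hypotheses, Theorem~\ref{theorempstnormalizedlaplacian} gives that $a$ and $b$ are strongly cospectral with rational eigenvalue support, and Theorem~\ref{theorem:rationaleigenvaluestree} applied in the case where $a$ and $b$ lie in the same bipartite class yields $|S^-|=1$. Invoking the normalized-Laplacian twins result cited in the paragraph preceding the corollary, I conclude that $a$ and $b$ are twins, that is, either $N(a)=N(b)$ or $a\cup N(a)=b\cup N(b)$.

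The remaining step is a short case analysis, which I expect to be the only potential snag but is essentially forced. The second alternative would give $b\in N(a)$, so $a$ and $b$ would be adjacent; but adjacent vertices of the bipartite graph $G$ lie in opposite bipartite classes, contradicting the hypothesis. Hence $N(a)=N(b)$. Since $G$ is a tree, any two distinct vertices share at most one common neighbor (two common neighbors would close a $4$-cycle), so $|N(a)|\le 1$; as $G$ is connected on at least three vertices, $|N(a)|=1$, and $a$ and $b$ are leaves attached to this common neighbor, which is exactly the claim.
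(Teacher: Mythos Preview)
Your proof is correct and follows essentially the same route as the paper: reduce to $|S^-|=1$ via Theorem~\ref{theorem:rationaleigenvaluestree}, deduce that $a$ and $b$ are twins, and then use acyclicity to force the leaf-siblings configuration. The only cosmetic difference is that you invoke the normalized-Laplacian twins lemma as the black box the paper announces, whereas the paper's proof unpacks that lemma inline (showing $\ket a-\ket b$ is a $1$-eigenvector and that $\deg(a)=\deg(b)$ via the $0$-eigenvector) before concluding $N(a)=N(b)$; your final case analysis is if anything more explicit than the paper's.
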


\begin{proof}
From Theorem~\ref{theorem:rationaleigenvaluestree}, we know that $S^- = \{1\}$. If $\mathcal{L} = \sum_r \theta_r F_r$ is the spectral decomposition of the normalized Laplacian, we define

\begin{equation*}
    z^+ = \sum_{\theta_r \in S^+} F_r e_a \text{,} \hspace{1cm} z^- = \sum_{\theta_r \in S^-} F_r e_a.
\end{equation*}

As $a$ and $b$ are strongly cospectral, it is easy to see that $z^- = \frac{1}{2}(\ket a - \ket b)$, and because $|S^-| = 1$, it follows that $\ket a - \ket b$ is an eigenvector for the eigenvalue $1$. Moreover, $D^{1/2}\ket{\mathds{1}}$ is an eigenvector for the eigenvalue $0 \in S^+$, so the entries corresponding to $a$ and $b$ are equal, therefore $\deg(a) = \deg(b)$. It is then straightforward to check that $a$ and $b$ have the same set of neighbors, and because the graph is acyclic on at least three vertices, they are leafs attached to the same common neighbor.
\end{proof}

It is easy to check that the path on two vertices, $P_2$, admits perfect state transfer between its only two vertices also in the normalized Laplacian model. This is the only case in which vertices in different bipartite classes are involved in state transfer if $|S^-| = 1$.

\begin{corollary}
Let $G$ be a tree with two or more vertices. Assume that $a$ and $b$ are vertices in different bipartite classes, and such that perfect state transfer happens between them. If $2$ is the only eigenvalue in $S^-$, then $G = P_2$.
\end{corollary}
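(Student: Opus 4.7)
The plan is to leverage the hypothesis $S^- = \{2\}$ to pin down the projection $F_{2}\ket a$ completely, and then use the explicit form of the $2$-eigenvector (available because trees are bipartite) to force $G$ to have only two vertices.

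First, I would invoke strong cospectrality of $a$ and $b$. Since $F_r \ket a = -F_r \ket b$ for $\theta_r \in S^-$, and $\sum_r F_r = I$, summing over $S^-$ gives
\[
\sum_{\theta_r \in S^-} F_r \ket a \;=\; \tfrac{1}{2}\bigl(\ket a - \ket b\bigr).
\]
Under the hypothesis $S^- = \{2\}$, this collapses to $F_{2}\ket a = \tfrac{1}{2}(\ket a - \ket b)$.

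Next, I would identify $F_{2}$ explicitly. Because $G$ is a connected bipartite graph with bipartition $(X, Y)$, the eigenvalue $2$ of $\mathcal{L}$ has multiplicity one, and an eigenvector $v$ may be obtained from the spanning $0$-eigenvector $D^{1/2}\ket{\mathds{1}}$ by the sign-flip symmetry already noted in the paper: $v_i = \sqrt{d_i}$ if $i \in X$ and $v_i = -\sqrt{d_i}$ if $i \in Y$. Since this eigenspace is one-dimensional, $F_{2} = v v^{T}/\|v\|^2$ with $\|v\|^2 = \sum_i d_i = 2|E(G)|$; in particular, $F_{2}\ket a = (v_a/\|v\|^2)\, v$ is a scalar multiple of $v$.

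Finally, equating the two expressions for $F_{2}\ket a$ gives $v = c(\ket a - \ket b)$ for some nonzero scalar $c$, which forces $v_i = 0$ for every $i \notin \{a,b\}$, i.e. $d_i = 0$ for such $i$. Since $G$ is connected on at least two vertices, every vertex has positive degree, so no such $i$ exists; hence $V(G) = \{a, b\}$ and $G = P_2$. I do not expect a real obstacle here: the whole argument is driven by the rigidity of the $2$-eigenspace in the bipartite case, and once that eigenvector is written down the conclusion is essentially a one-line coordinate comparison. The only bookkeeping point worth double-checking is that $a$ and $b$ really lie in opposite bipartite classes (given) so that $v_a$ and $v_b$ have opposite signs, which is consistent with $F_{2}\ket a = -F_{2}\ket b$ and with the right-hand side $\tfrac12(\ket a-\ket b)$.
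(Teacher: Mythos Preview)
Your argument is correct. Both you and the paper begin from the same observation---that $S^-=\{2\}$ forces $F_{2}\ket a = \tfrac12(\ket a - \ket b)$, so $\ket a - \ket b$ lies in the $2$-eigenspace---but you diverge in how you finish. The paper simply invokes the general fact (carried over from the combinatorial Laplacian setting and already used in the proof of Corollary~\ref{cor:negativesupporttwins}) that $|S^-|=1$ forces $a$ and $b$ to be twins, and then notes that twins lying in different bipartite classes of a tree must be adjacent with no further neighbours, hence $G=P_2$. You instead exploit the fact that for a connected bipartite graph the $2$-eigenspace is spanned by the explicit vector $v$ with $v_i=\pm\sqrt{d_i}$, and a coordinate comparison with $\ket a-\ket b$ forces $d_i=0$ for all $i\notin\{a,b\}$. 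Your route is entirely self-contained and sidesteps the twins lemma; the paper's route is shorter given the tools already in place and makes clear that both corollaries are instances of the same $|S^-|=1\Rightarrow\text{twins}$ phenomenon.
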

\begin{proof}
From the proof of Corollary~\ref{cor:negativesupporttwins}, we know that if $|S^-| = 1$, then $a$ and $b$ are twins. Since $a$ and $b$ are in different bipartite classes, they must be connected, and they cannot have another neighbor.
\end{proof}

Now, from Theorem~\ref{theorem:rationaleigenvaluestree} we know that if perfect state transfer happens in a tree between $a$ and $b$ and they are in different bipartite classes, all the eigenvalues in $S^-$ are of the form $2/k$, for $k$ an odd positive integer. The corollary above shows that if there are no eigenvalues of this form with $k > 1$, then unless the graph is $P_2$, there is no perfect state transfer between vertices in different bipartite classes. Calculation done in SAGE, using CoCalc, showed no trees with eigenvalues of this form for trees with up to $16$ vertices.

\section{Conclusion}

Two vertices are cospectral if and only if, at all times, the amplitude of survival probability in a quantum walk is the same for both. We have shown that when having the normalized Laplacian acting as Hamiltonian, cospectrality is equivalent to, in a classical uniform random walk, having the probability of return at both vertices after $k$ steps being the same for all $k$. 

This suggests that the normalized Laplacian is perhaps the Hamiltonian choice that will provide more similarities between classical random walks and quantum walks.

A basic quantum walk feature is the possibility of perfect state transfer. Our main result was a characterization of perfect state transfer in this model. We borrow from known methods, but we overcome an apparent extra difficulty when dealing with non-integral matrices.

On the matter of state transfer in the normalized Laplacian model in paths, it is shown in \cite{alvir2016perfect} that perfect state transfer occurs if and only if the path has $2$ or $3$ vertices. This is done upon showing an equivalence, by means of an equitable partition, to state transfer between antipodal vertices of cycles in the adjacency matrix model. We note that in conjunction with a result in \cite{PalCirculantPGST}, one obtains that pretty good state transfer in paths according to the normalized Laplacian occurs if and only if the path contains $2^k+1$ vertices, for $k \geq 0$. We have independently obtained this result using elementary methods. The connection with the known fact \cite{CoutinhoBanchiGodsilSeve} that pretty good state transfer in paths with the combinatorial Laplacian occurs if and only if the path has $2^k$ vertices cannot be ignored.

For trees, we were not able to show a complete analogous result to the combinatorial Laplacian case --- that is, that perfect state transfer does not occur. We were however able to significantly restrict the possible cases, and therefore we conjecture that apart from the paths on $2$ and $3$ vertices, no tree admits perfect state transfer according to the normalized Laplacian.

\section*{Acknowledgements}

Pedro Baptista acknowledges the scholarship UFMG-ADRC and the Master's scholarship from CAPES.

\bibliographystyle{unsrt}
\bibliography{qwalk.bib}

\end{document}